\newcommand{\blk}{\color{black}}
\setlist[itemize]{noitemsep}
\setlist[enumerate]{noitemsep}
\begin{document}

\title{Superluminal Transformations and Indeterminism}

\author{Amrapali Sen}
\thanks{Corresponding author}
\affiliation{International Centre for Theory of Quantum Technologies, University of Gda\'{n}sk, Gda\'{n}sk, Poland}

\author{Flavio Del Santo}
\affiliation{Group of Applied Physics, University of Geneva, 1211 Geneva 4, Switzerland;\\ and Constructor University, Bremen, Germany}

\date{\today}

\begin{abstract}

Quantum theory is widely regarded as fundamentally indeterministic, yet classical frameworks can also exhibit indeterminism once infinite information is abandoned. At the same time, relativity is usually taken to forbid superluminal signalling, although Lorentz symmetry formally admits superluminal transformations (SpTs). Dragan and Ekert have argued that SpTs entail a form of indeterminism analogous to that encountered in quantum theory. Here, we derive a theory-independent no-go theorem from a set of natural assumptions: any framework admitting non-order-preserving SpTs must either abandon finite information, relinquish time-symmetric informational content, deny that the past stores memory, or abandon the notion that time determines a preferred causal ordering. In particular, one possible implication is that any theory accommodating SpTs suggests an ontology with unbounded informational content, akin to deterministic classical theories formulated over the real numbers. Consequently, any ontic indeterminacy associated with superluminal transformations \textit{cannot} originate from finite information.

\end{abstract}
\maketitle
\section{Introduction}
\label{intro}
Quantum theory is often taken to exhibit a form of fundamental indeterminacy that manifests both in the absence of perfectly determinate properties (ontic indeterminacy) and in the non-uniqueness of future outcomes associated with a given pure state (fundamental indeterminism). This is supported by the Heisenberg uncertainty principle, which limits the simultaneous determination of conjugate quantities, and by violations of Bell inequalities, which, under natural assumptions, imply that the outcomes of Bell experiments cannot be predetermined \cite{pironio2010random}.

More recently, several works have challenged the view that classical theories are fully deterministic \cite{born2012physics, ornstein1989ergodic, prigogine1997end, de2009symplectic, norton2003causation, drossel2015relation, del2018striving, del2019physics, gisin2020real, eyink2020renormalization, ben2020structure, del2021relativity, gisin2021indeterminism, gisin2021synthese, del2021indeterminism, del2023prop, santo2023open, chiribella, DelSantoGisin2024, del2025features}. One line of argument, initiated by N.~Gisin and one of us (FDS) \cite{del2019physics}, maintains that classical physics exhibits ontic indeterminacy once a tacit assumption is abandoned---namely, that physical quantities are specified with infinite information. Rejecting this assumption amounts to discarding real-number representations, most of which are uncomputable and encode infinite informational content \cite{gisin2020real, gisin2021indeterminism}. By relaxing the requirement of infinite precision, indeterminacy is no longer uniquely quantum but can also be intrinsic to classical physics. Combined with chaotic dynamics, this opens the possibility that classical systems inherently display indeterministic behaviour \cite{del2019physics, del2021indeterminism, del2025features}.

At the same time, Dragan and Ekert \cite{dragan2020quantum} recently employed superluminal transformations (SpTs), first introduced in \cite{parker1969}, extending the Lorentz group to include superluminal velocities, to argue that such transformations give rise to indeterminism in quantum physics. In this way, they aim to ground quantum indeterminism in relativistic superluminal considerations. In a previous work, one of us (AS) analyzed superluminal quantum reference frames for thermodynamic and probabilistic consistency \cite{sen2026superluminal}.



In contrast, here we formulate a theory-independent no-go theorem which, under a set of natural assumptions, establishes an incompatibility between non-order-preserving SpTs and finite information. This suggests that the indeterminacy characteristic of finite-information frameworks cannot straightforwardly be reconciled with SpTs.

We analyze the structure of SpTs, with particular attention to the subclass of Superluminal Lorentz Transformations (SpLTs) \cite{parker1969}, and their implications for the distribution of finite information in spacetime. Our arguments do not depend on the specific form of the SpLTs introduced in \cite{parker1969} and studied in \cite{dragan2020quantum, sen2026superluminal}; they apply to a broad class of SpTs, including nonlinear generalizations, provided that they are non-order-preserving for timelike events. Such transformations reverse the temporal order of at least one set of spacetime events with respect to the initial reference frame.

Our analysis does not rely on the details of any specific physical model --- indeed, we do not claim that superluminal theories correspond to physical reality --- but rather explores, in the spirit of a toy framework, generic structural features shared by broader classes of theories exhibiting these properties. Although speculative, this approach may help clarify the conceptual structure of indeterministic theories and question the claim that superluminal transformations could constitute the fundamental origin of indeterminacy, as suggested in~\cite{dragan2020quantum}.

The possibility of superluminal phenomenons and observers has traditionally been challenged, primarily because such entities are often argued to permit backwards-in-time signalling and therefore generate causal paradoxes \cite{einstein1907uber,tolman1917velocities}. Nevertheless, the status of superluminal phenomenon has been frequently questioned, and the idea of breaking this speed limit has popped up from time to time, both from a purely theoretical point of view as well as in an attempt to explain various phenomena \cite{bilaniuk1962meta,feinberg1967possibility,recami1986classical,rembielinski2012meta,grushka2013tachyon}. Regardless, these research programs are interesting from a number of perspectives. If one takes the position that superluminal phenomena has genuine physical significance, their study would naturally carry explanatory importance. But even if one’s point of view is opposed to this idea, these frameworks continue to serve as interesting conceptual tools. Understanding why exactly they work or fail can potentially yield new insights into how different principles of a theory constrain each other. 

The paper is organized as follows. Section II revisits classical indeterminism from the standpoint of finite information. Section III reviews indeterminism arising from superluminal transformations. Section IV outlines and reviews the information-theoretic framework for propensities and indeterministic causality \cite{santo2023open}. Section V presents our main result: a no-go theorem showing that superluminal transformations and finite information are incompatible (given a few other natural assumptions). Section VI discusses the interpretational implications of relaxing each assumption. Section VII gives a discussion of the results and our interpretation of indeterminacy from superluminal transformation, and some future perspectives. Section VIII concludes.

\section{Classical indeterminism from finite information}

The orthodox interpretation of classical theory regards it as fully deterministic. However, this conclusion can only be drawn from two joint assumptions: (i) at the dynamical level, the solutions to the equations of motion in classical physics---as represented by ordinary partial differential equations---admit a unique continuation both into the past and into the future; provided that (ii) at the kinematical level, the initial conditions are infinitely precise---as represented by real numbers in the theory.

In a recent series of works \cite{del2019physics, gisin2020real, del2021relativity, gisin2021indeterminism, gisin2021synthese, del2021indeterminism, del2023prop, santo2023open, DelSantoGisin2024, del2025features}, N.~Gisin and one of us (FDS) have investigated the consequences of relaxing assumption (ii) by adopting a principle of finite information density. According to this principle, a finite region of space (or spacetime) can contain only a finite amount of information. Here we adopt a physicalist view of information: it resides in the distinct states of the universe’s physical degrees of freedom. This can be quantified in different ways; for instance, by Kolmogorov complexity, which measures the information content (in bits) of the shortest algorithm that outputs the considered piece of information.\footnote{In what follows, unless otherwise specified, we will use Kolmogorov complexity as the measure of information. Note, however, that our results do not depend on the specific choice of information measure.}


Since most real numbers are incomputable and carry infinite information (their Kolmogorov complexity is infinite), the use of real-valued $n$-tuples to represent pure classical states is incompatible with this principle. In Ref.~\cite{del2019physics}, this issue was addressed by replacing real numbers with \textit{finite information quantities} (FIQs), which contain only a finite amount of information at any finite time. FIQs are grounded in the notion of propensities \cite{del2023prop}: each digit of the mathematical representation of a physical quantity is associated with an objective tendency or potentiality, quantified by a rational number (between 0 and 1) to assume a particular value. Propensities then evolve in time, and when they take one of their extremal value, the associated digit becomes fully determined. This framework thus leads to a classical theory which features ontic indeterminacy, but that is empirically equivalent to the standard deterministic interpretation.

We stress that, although in that particular model the dynamical part of classical theory (i.e., the equations of motion) was left unchanged, the ontic indeterminacy of the states alone is sufficient to introduce genuine indeterminism. Classical dynamics may exhibit chaotic behaviour, characterized by an extreme sensitivity to initial conditions: infinitesimal differences between nearby states grow exponentially with time, on a characteristic timescale set by the Lyapunov time. In deterministic classical mechanics, this feature is often regarded as merely epistemic, since perfectly specified initial conditions still determine a unique trajectory. In the present framework, however, the initial indeterminacy is assumed to be ontic (similar, though for different reasons, to the indeterminacy of a pure quantum state). Under chaotic evolution, this finite indeterminacy is dynamically amplified: although Liouville evolution preserves the total phase-space measure, the support of the state becomes progressively stretched and dispersed across macroscopically distinct regions of phase space. After sufficiently many Lyapunov times, the evolved state therefore corresponds to multiple physically possible future evolutions rather than to a single predetermined history, giving rise to genuine indeterminism.

It should be noted that the present proposal does not depend on any specific interpretation of classical indeterminism; we refer to such models only because they are already well developed in the literature. The argument applies more generally to any theory in which indeterminacy arises from a finite informational content in the specification of the state, including possible finitary formulations of quantum theory. We are therefore not concerned with particular existing theories, but rather with a structural comparison between toy frameworks that share these fundamental features.



\section{Indeterminism from Superluminal transformations}

 Special relativity is typically interpreted as enforcing an absolute prohibition on superluminal motion: nothing, whether matter or information, may exceed the speed of light. The conventional justification is that faster-than-light propagation would enable signalling to the past and thereby undermine causal consistency. Because this argument is so familiar, it is often treated as definitive. However, the claim that superluminal transformation is fundamentally impossible has never been conclusively established, and it tends to re-emerge whenever the limits of established theory are reconsidered.

 The proposal put forward in Ref.~\cite{dragan2020quantum} exemplifies this continued debate. The authors advocate a reversal of the usual interpretation: rather than viewing superluminal observers as incompatible with relativity, they argue that the real source of difficulty lies in the assumption of local determinism, linking the two pillars of physics---quantum theory and relativity---and suggesting that the latter provides the foundation for the former. According to their analysis, once this assumption is abandoned, the standard paradoxes associated with superluminal motion cease to arise, since with sufficiently large uncertainties observers\footnote{We adopted the standard usage from quantum theory in which “observation” is understood operationally as a measurement process, rather than implying any role for a conscious observer. More precisely, a measurement corresponds to a physical interaction that establishes correlations between a system and another physical system that records the outcome. In this sense, an “observer” is simply any physical system that can encode or register such information, not necessarily a human agent.} 
 cannot establish whether they have in fact observed superluminal signalling. On this basis, the authors further suggest that the intrinsic randomness characteristic of quantum theory is not merely compatible with relativity but may, in fact, arise from it. Unsurprisingly, this position has generated significant criticism in the literature~\cite{DelSanto_2022, Horodecki2023, grudka2022, Grudka2023}. The mathematical derivation of Lorentz transformations that the authors of Ref. \cite{dragan2020quantum} talk about contains both subluminal and superluminal velocities as derived in \cite{parker1969}, which we will now briefly review.

Let us consider the (1+1)-dimensional case, where an inertial frame \((t', x')\) moves with a constant velocity \(v\) relative to another inertial frame \((t, x)\). By the principle of relativity, the transformation between these two frames must be linear and depend only on the relative velocity \(v\). Moreover, the inverse transformation should be obtained simply by reversing the sign of \(v\). Hence, we can write the general form of such transformations as
\begin{gather}
\begin{aligned}
x' &= A(v)\,x + B(v)\,t,\\
x  &= A(-v)\,x' + B(-v)\,t',
\end{aligned}
\end{gather}
where \(A(v)\) and \(B(v)\) are functions to be determined.


Next, we note that reversing the spatial axis \(x \to -x\) also reverses the relative velocity \(v \to -v\). Since the transformation is linear, this reversal should preserve the form of the equations, possibly up to an overall sign. This condition implies that \(A(v)\) must be either symmetric,
\begin{equation}
A(-v) = A(v),
\end{equation}
or antisymmetric,
\begin{equation}
A(-v) = -A(v).
\end{equation}

Depending on the symmetry of \(A(v)\), this yields two distinct solutions.

For the symmetric case, \(A(-v) = A(v)\), we obtain
\begin{gather}
\begin{aligned}
x' &= \frac{x - vt}{\sqrt{1 - v^2}},\\
t' &= \frac{t - vx}{\sqrt{1 - v^2}}.
\end{aligned}
\end{gather}
This form is well-defined for subluminal velocities \(|v| < 1\) (where c=1). \footnote{At this point, several earlier approaches attempting to extend relativistic kinematics beyond the speed of light introduced complex or imaginary coordinate structures, including notions of “imaginary time” or imaginary Lorentz factors, in order to formally accommodate superluminal velocities: \cite{parker1969ftl,ramon1980complex,maccarrone1984revisitation,maccarrone1982group,ibison2007tachyons,roldan2023superluminal}, to name a few. In many of these formulations, the appearance of imaginary quantities does not correspond to directly measurable temporal evolution, but rather signals a transition into regimes that are inaccessible within ordinary classical spacetime descriptions. In the present work, we avoid interpreting superluminal sectors through imaginary coordinate constructions inherited from subluminal Lorentz transformations and focus on independent superluminal transformations which can be derived directly from mathematical constraints in the basic principles of special relativity. Our aim is therefore not to analytically continue the subluminal theory into complex domains, but rather to investigate the structural and informational consequences of admitting superluminal transformations on their own conceptual footing, independent of their subluminal counterparts.}

In contrast, for the antisymmetric case \(A(-v) = -A(v)\), we obtain: 
\begin{gather}
\begin{aligned}
x' &= \pm\,\frac{v}{|v|}\,\frac{x - vt}{\sqrt{v^2 - 1}},\\
t' &= \pm\,\frac{v}{|v|}\,\frac{t - vx}{\sqrt{v^2 - 1}}\label{eq: sup}.
\end{aligned}
\end{gather}

These transformations are well-behaved for superluminal velocities \(v > 1\). Since there is no \(v \to 0\) limit in this case, the overall sign remains undetermined and may be fixed by convention. Also, note that at superboost, i.e., $v \rightarrow \infty$, space and time coordinates swap, i.e., $x' \rightarrow t$; $t'\rightarrow x $.

Thus, by assuming linearity, relativity, and invariance under spatial reversal, we recover both the standard Lorentz transformations for subluminal motion and a consistent ``tachyonic'' extension valid for superluminal motion.

We would also like to stress again that the transformation (SpLT) derived above is only an example of a superluminal transformation that our argument applies to. The no-go theorem we propose later holds equally well for any form of SpTs if we ensure non-order-preserving transformations for at least some timelike events. Also, the arguments put forward in this paper about SpTs can be extended to any general non-order-preserving transformations.

Many discussions surrounding superluminal phenomena implicitly blur the distinction between superluminal signalling, superluminal causation, and the existence of superluminal Lorentz transformations. These notions, however, are conceptually distinct. Superluminal signalling is established impossible, but the study of superluminal causal structures has been explored extensively in several contexts \cite{vilasini2022impossibility,horodecki2019relativistic,durr2009bohmian}. 

In ref. \cite{dragan2020quantum}, the authors claim that the inherent randomness we get in quantum theory is a result of the special theory of relativity. They suggest that the overlooked superluminal Lorentz transformations derived in \eqref{eq: sup} imply non-deterministic initial conditions and dynamics giving rise to the same ontic randomness that quantum theory brings. However, their arguments with superluminal Lorentz transformation respect the postulates of relativity and no-superluminal signalling: no information travels faster than the speed of light. The relevant indeterminacy does not arise from incompatible observables, as in standard quantum theory, but from the impossibility of operationally establishing definite superluminal signalling relations under sufficiently large uncertainties.
\footnote{Here superluminal observer could be a subliminal observer that is boosted to a superluminal velocity, moving now in a spacelike trajectory. It is the same as seeing a subluminal observer in a superluminally boosted reference frame.}.

\section{Propensities and Indeterministic Causality}


Following our earlier discussion, finite information imposes fundamental limits on how precisely physical properties can be determined, thereby leading to ontic indeterminacy. As recalled in Section~\ref{intro}, when such indeterminacy is combined with chaotic dynamics, it gives rise to a growing uncertainty regarding future states, corresponding to multiple possible outcomes of future observations, i.e., to genuine indeterminism. Importantly, although there remains a causal connection between past and future states, different future outcomes may possess different intrinsic likelihoods of occurring.

Such likelihoods are often represented by probabilities. However, in the literature on indeterminism in physics, both classical and quantum, one frequently encounters instead the notion of \textit{propensity} \cite{popper1959propensity, suarez2010probabilities, ballentine2016propensity, gisinpropold, del2023prop}. In Ballentine’s words, propensities constitute ``a form of causality that is weaker than determinism'' \cite{ballentine2016propensity}. More precisely, they are objective dispositional properties characterizing the tendency of a particular event to occur. In this sense, the propensity framework may be viewed as a generalization of the Leibnizian Principle of Sufficient Reason: nothing occurs without a reason or causal ground, while causation itself is understood more broadly than strict deterministic implication.

A central issue is explaining how a system with several possible outcomes ultimately resolves into one definite result. Prior to observation, the different outcomes exist only as possibilities with associated likelihood (i.e., propensity), yet measurement always yields a single realized state. This creates a tension between the continuous evolution of the system and the emergence of one concrete outcome. This transition is referred to as \textit{actualization} \cite{del2023prop}, extending the notion of collapse in quantum mechanics to any indeterministic framework involving a shift from potentiality to actuality. As with the quantum measurement problem, the underlying mechanism responsible for this transition—if such a mechanism exists—remains unresolved \cite{del2021indeterminism, del2023prop}.

For example, a propensity equal to $1$ corresponds to certainty that a given event will occur, whereas a propensity of $1/2$ expresses equal tendencies for occurrence and non-occurrence. A propensity of $3/4$ indicates that the event possesses a stronger intrinsic tendency to actualize than not to actualize, precisely quantified by that value.

Propensities are therefore closely related to probabilities, but they differ in several crucial aspects. First, they characterize single-case events, rather than frequencies emerging from repeated trials or subjective degrees of belief. Second, and more importantly, propensities are intended to encode an underlying causal structure rather than merely quantify statistical correlations.

In this context, it is useful to recall Eddington’s distinction between \textit{causality} and \textit{causation} \cite{eddington2019nature}. By \textit{causality}, one refers simply to the possibility that two events are causally connected---for instance, that they are not spacelike separated in relativistic terms. By \textit{causation}, instead, one refers to the stronger notion whereby an event $C$ genuinely acts as the cause of an effect $E$. Causation therefore introduces a fundamental asymmetry between events. Propensities aim precisely at quantifying the strength of such directed causal tendencies.

Because propensities encode genuine causal asymmetry, they cannot formally be identified with ordinary conditional probabilities, as emphasized by Humphreys \cite{humphreys1985propensities}. Indeed, standard conditional probabilities are formally symmetric under Bayes’ rule:
\begin{equation}
    p(E|C)=\frac{p(C|E)p(E)}{p(C)}.
\end{equation}
In such expressions, the conditioning relation is purely formal and may always be inverted mathematically. Propensities, however, are not merely formal labels; they embody an objective directional relation between cause and effect. For instance, there may exist a propensity for an individual to drink a glass of cold water given that the day is extremely hot. The reverse implication---that the hot weather occurs because the individual drinks cold water---simply does not represent a meaningful causal relation. The asymmetry is therefore physical and ontological, not merely epistemic, and should therefore lead to a formalization of propensities that reflects this desideratum \cite{ballentine2016propensity, del2023prop}. Note that this formalizes what is already a widespread physical intuition in quantum mechanics: a radioactive atom possesses an intrinsic tendency to decay, quantified by its quantum state. In this sense, propensities provide a conceptual framework capable of unifying classical and quantum forms of indeterminism while keeping causality at the centre of the picture.

Equipped with the notion of propensities, causal influence naturally defines a branching structure of possible futures. We assume this structure to be dynamical: as time evolves, only one among the various possible events or outcomes becomes actual. In other words, from a set of potentialities quantified by propensities, a single outcome is eventually actualized. There is therefore a genuine process of actualization of propensities, analogous to the transition from a superposition of possible outcomes to a single realized state in quantum mechanics during wavefunction collapse.
Importantly, this process should not be understood epistemically, as if one merely reveals a pre-existing but unknown outcome. Rather, the actualization process is ontological: prior to actualization, the different future possibilities genuinely exist only as potentialities endowed with different causal tendencies to occur.

In this view, it is convenient to describe the evolution  of a system using a directed acyclic multigraph where two distinct types of causal links connect events: potential (represented in blue) and actual (represented in red), see Fig. \ref{f3.2}. The ``potential" causal connections are associated with propensities—which are non-negative rational numbers that sum to 1 and quantify the objective tendency of an event to obtain. The ``actual" causal connections are associated with extremal propensities, which are either $0$ or $1$, with $1$ indicating that the actualization of one among the previously possible causal connections has taken place  \cite{santo2023open}.

\begin{figure}[h]
    \centering
    \includegraphics[width=0.8\columnwidth]{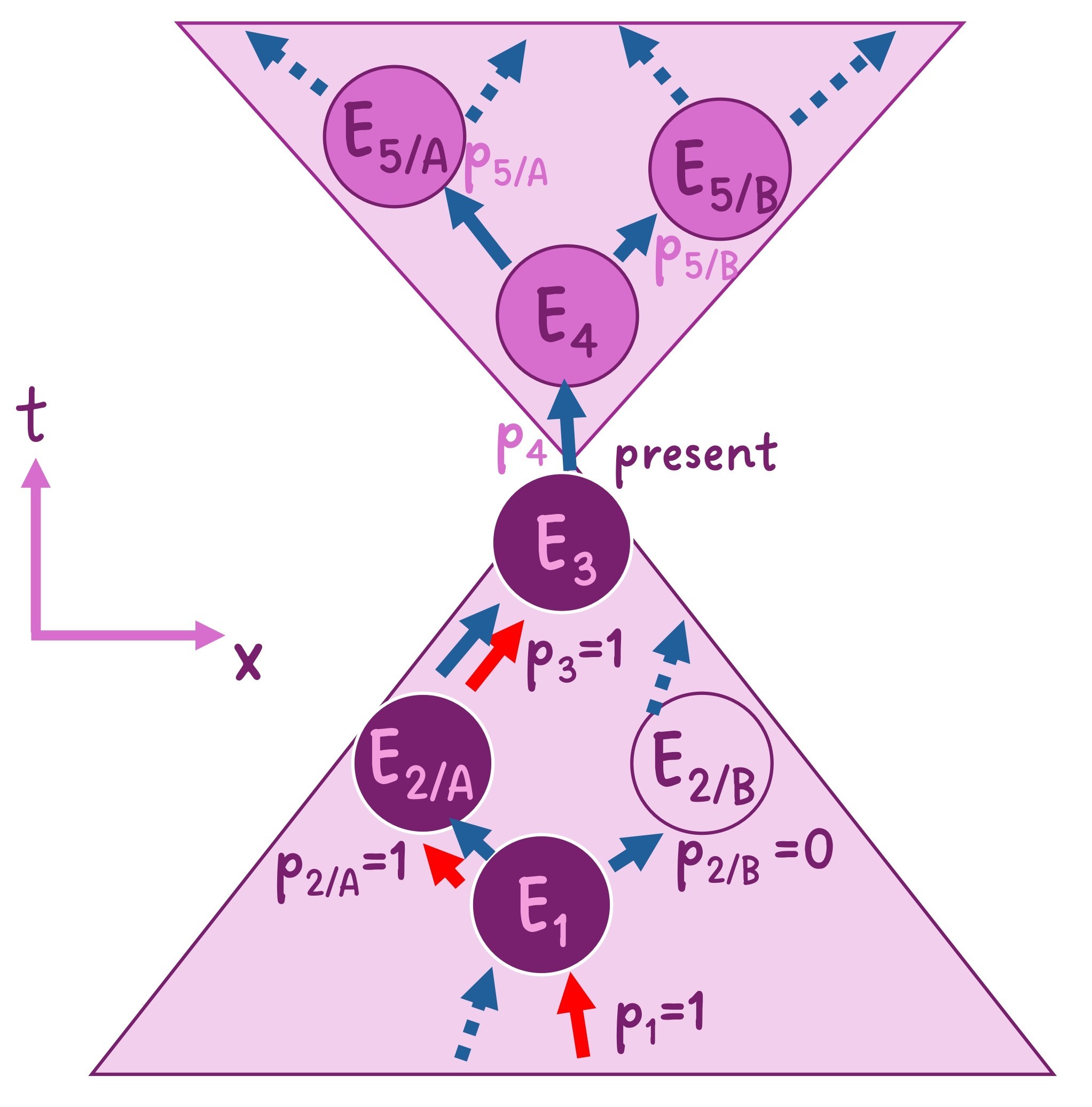}
    \caption{Past and Future of Information: The nodes represent events, the red edges connect events that have already actualized, while the blue edges represent potential causal connections, associated with a weight that represents the propensity for an event to actualize. The dotted edges going out of the information cones (which are bounded by the lightcone) of the multigraph are causal connections that either never actualize or do not actualize to observable interaction, hence, for our consideration, the propensity of such events is zero.}
    \label{f3.2}
\end{figure}

In the framework considered here, the “present” does not denote a single isolated event, but rather a boundary (or interface) between realized events (the past) and unrealized possibilities (the future) regarding a continuity equation, we do not assume a conserved current in
the usual dynamical sense at the level of this schematic representation. The figure is meant to
capture a kinematical structure rather than a fully specified dynamical law. Any continuity
relation linking past and future would therefore be theory-dependent and lies beyond the
scope of the present toy model. 


The state of a system can be defined as the information recording the whole multigraph,\footnote{This is comparable with causal modeling language in \cite{vilasini2022general} where this graph shows potential causal connections (the wavy causal arrows), potential cause and affects relations (change in probability = solid ones in which the wavy causal arrows actualizes), and then potential causal connections that do not actualize and hence do not satisfy affects relations- dashed arrows. The relations are summarized in Definition 7 and Fig. 7 in \cite{vilasini2022general}}\footnote{The present framework bears some conceptual resemblance to causal set approaches developed by Sorkin, Dowker, and collaborators \cite{Dowker2005, Sorkin2003, Bombelli1987}, in the sense that causal and informational relations between events are treated as more fundamental than a fixed spacetime geometry. However, unlike causal set theory, we do not assume that spacetime itself is fundamentally discrete or represented by a partially ordered set. Our framework instead remains kinematical and informational in character, focusing on how event structures and informational accessibility transform under extended Lorentz-type transformations.} which is an ordered collection
\begin{equation}
    G(t): [E(t), \Omega(t), P(t)],
\end{equation}

  where $E(t)$ is the set of all events (nodes), $\Omega(t)$ is the set of all (potential and actual) causal connections (edges) which covers all possible alternative histories and futures,  $P(t)$ is the set of individual propensities (weights) all at the considered time $t$. The manifold of our causal graph encompasses all potential events (as nodes, however, represent all possible events). Some of these will never actualize. This is why we have a multigraph, where the different types of edges distinguish exactly this. 


In \cite{del2019physics, gisin2020real, del2021relativity, gisin2021indeterminism, gisin2021synthese, del2021indeterminism, del2023prop, santo2023open, DelSantoGisin2024, del2025features}, the principle of finiteness of information density was enforced, according to which finite regions of space can contain only finite information.
Hence, in a finite universe with a finite number of events, $G(t)$, which records all information about the events and their causal structure, must itself contain only finite information. Real numbers, which typically encode infinite information, are therefore not a natural choice for representing the propensities, which we therefore assume to be rational-valued.

We will instead consider that there is a potential causal connection between the $i$-th and $j$-th events if and only if the associated propensity $p_{ij}$ is strictly positive, such that 
\[
\Omega(t) = \{ (i,j) | p_{ij} > 0,\; p_{ij} \in P(t),\; i,j \in E(t) \}.
\]
Hence, the information content of the whole graph is 
\[
I(G(t))=I(E(t), P(t)). 
\]

A fundamental property of Kolmogorov complexity is the chain rule, which holds up to an additive constant. The information can therefore be written as \begin{equation}I(G(t)) = I(E(t)) + I(P(t)|E(t)) + C_{0},
\end{equation}
where the $C_0$ is a constant independent of $E$ and $P$.
     The term $I(E)$ captures the information required to specify the set of events (nodes) constituting the graph. The conditional term $I(P| E)$ quantifies the information needed to specify the causal structure relating these events, namely, which pairs of events are connected and with what propensities, given that the events themselves are already fixed, according to the notion of event symmetry. The notion of event symmetry \cite{RidleyAdlam2023}, suggests that the physical or dynamical contribution assigned to an event does not depend on its absolute coordinate representation, but only on its intrinsic properties and its relations to other events. In this sense, coordinates are understood as observer-dependent labels, while the underlying event structure remains invariant under transformations. Different reference frames may therefore assign different coordinate values or temporal orderings to events, without altering the relational or physical content associated with the events themselves.
     This viewpoint is conceptually reconciled to Einstein’s point-coincidence argument and later relational and causal approaches to spacetime, including the causal set programme of Sorkin and Dowker\cite{Einstein1916, Einstein1923Foundation, Einstein1914, Einstein1916, Stachel1980HoleArgument, Sorkin2003, Dowker2005}. We also do not take a strong metaphysical stance, such as eternalism (block universe), growing block or presentism as an additional postulate. Instead, spacetime is treated as the kinematic background of physical events, while the central ontological focus of the framework is not spacetime itself, but the evolution of event-structure over this background.\\
Additionally, in our causal graph, we can also separate the information content of the graph $G(t)$ into a part relative to the past, and one relative to the future, and the following relations should hold: \\

\begin{enumerate}[label=\Roman*), leftmargin=1.5em]
    \item
    $I(G(t))_{\text{past}} = I(E(t))_{\text{past}}  + I(P(t)| E(t))_{\text{past}}+ C_P$\\
    
    and\\
    
    $I(G(t))_{\text{fut}} = I(E(t))_{\text{fut}} +I(P(t)| E(t))_{\text{fut}}+ C_F$\\

    where $C_P$ is the constant in the past causal graph, and $C_F$ is the constant in the future (represented as fut in the equation) causal graph. Note that the constants depend on the embedding scheme used for calculating the Kolmogorov complexity. \\

    
    \item 
    $I(P(t)| E(t))_{\text{past}} < I(P(t)| E(t))_{\text{future}}$
      \end{enumerate}
      

since the past holds a record of actualized events, where propensities take extremal values, while the future encodes open alternatives that require richer informational content (i.e., the information needed to encode a general rational number)\footnote{Because information storage is finite, the earliest records are gradually erased, as we move forward in time, leaving our knowledge of the remote past effectively indeterminate and open again. The indeterminacy of the “open past” differs from the indeterminacy in the future because it never evolves into actualization. Hence, the past still reflects an essential structural asymmetry: the recent past is remembered while the future is partly determined, therefore ``open".} 
.



\section{No-go theorem for Superluminal observers}


Actualization of propensities in a space-time with finite information density reveals a time-asymmetric structure where the past is a record of recently actualized events, and the future consists of potential alternatives. Under a SpLT \footnote{This argument can be extended to any non-order preserving SpT where the information structure after transformation does not align parallel to the space-time structure.}, the coordinates are swapped (i.e., position becomes time and vice versa in 1+1 dimensions) when the reference frame is boosted in the limit of infinite speed. From figure \ref{f3.3}, we see that the past and future densities of information are, in fact, symmetric over the $t'$ axis under a SpT, a simple coordinate transformtion relocating the observer to a superluminal reference frame.

\begin{figure}[]
		\centering
		\includegraphics[width= 0.48 \textwidth]{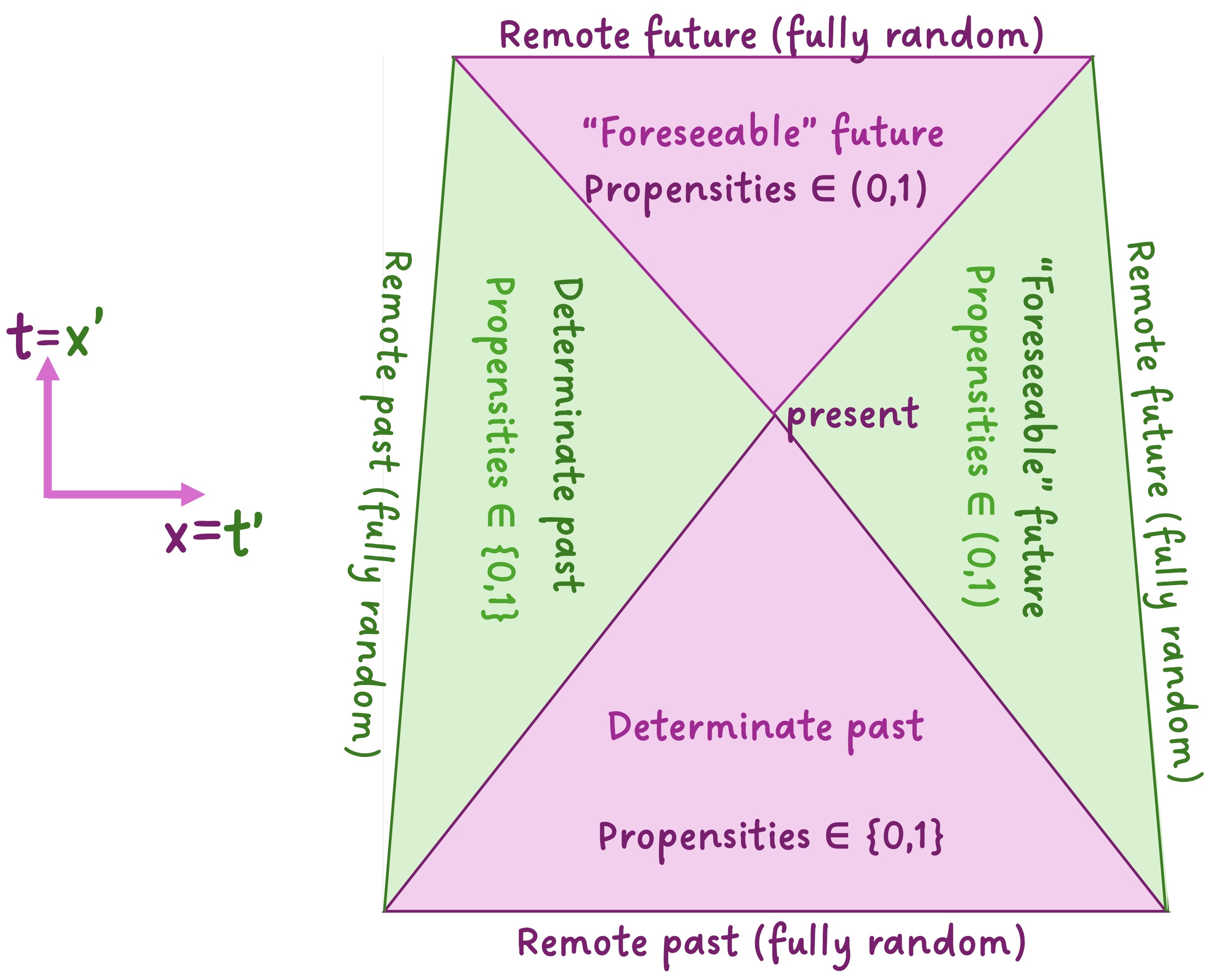}	\caption{Schematic representation of the informational structure associated with the light cone. The coordinate in purple denotes the reference frame of a subluminal observer, and the one in green represents a superluminal observer.  The diagram illustrates how regions identified as past and future are organized relative to the light cone and how they may be related under the allowed set of transformations. The figure is conceptual and is intended to highlight how the distinction between past and future, and the corresponding informational structure, depend on the underlying assumptions of the framework.}
		\label{f3.3}        
	\end{figure}


This unusual symmetry raises questions about the status of coordinates of space-time and information-causality under SpTs. As with other interpretational claims, a toy theory involving superluminal extensions require careful consistency analysis, for which no-go theorems can provide useful constraints. Every operational phenomenology implicitly entails an interpretational stance, one that can ultimately be supported or ruled out by such theorems.

To better understand the foundations and implications of such transformations in the form of a no-go theorem, consider the following assumptions:\\

A0. \textbf{Existence of non-order-preserving SpTs}\\

A transformation with relative velocity $|v| > c$ that maps events which are labelled as timelike-separated events to events which are labelled as spacelike-separated events in a different reference frame, and vice versa, such that:

\[
\exists \hspace{0.1cm} E_1, E_2 \in E : E_1 \prec E_2 \text{ but } S_v(E_1) \prec S_v(E_2) \text{ is false}.
\]

where $S$ is the SpT and $E_1$, and $E_2$ are events in the set $E$. For example, the anti-symmetric Lorentz extension in Eq \eqref{eq: sup} with $v > c$ where $x \rightarrow t''$ and $t \rightarrow x''$  at $v \to \infty$ in dimensions 1 + 1.\\

\begin{figure}[htp]
		\centering
		\includegraphics[width= 0.4 \textwidth]{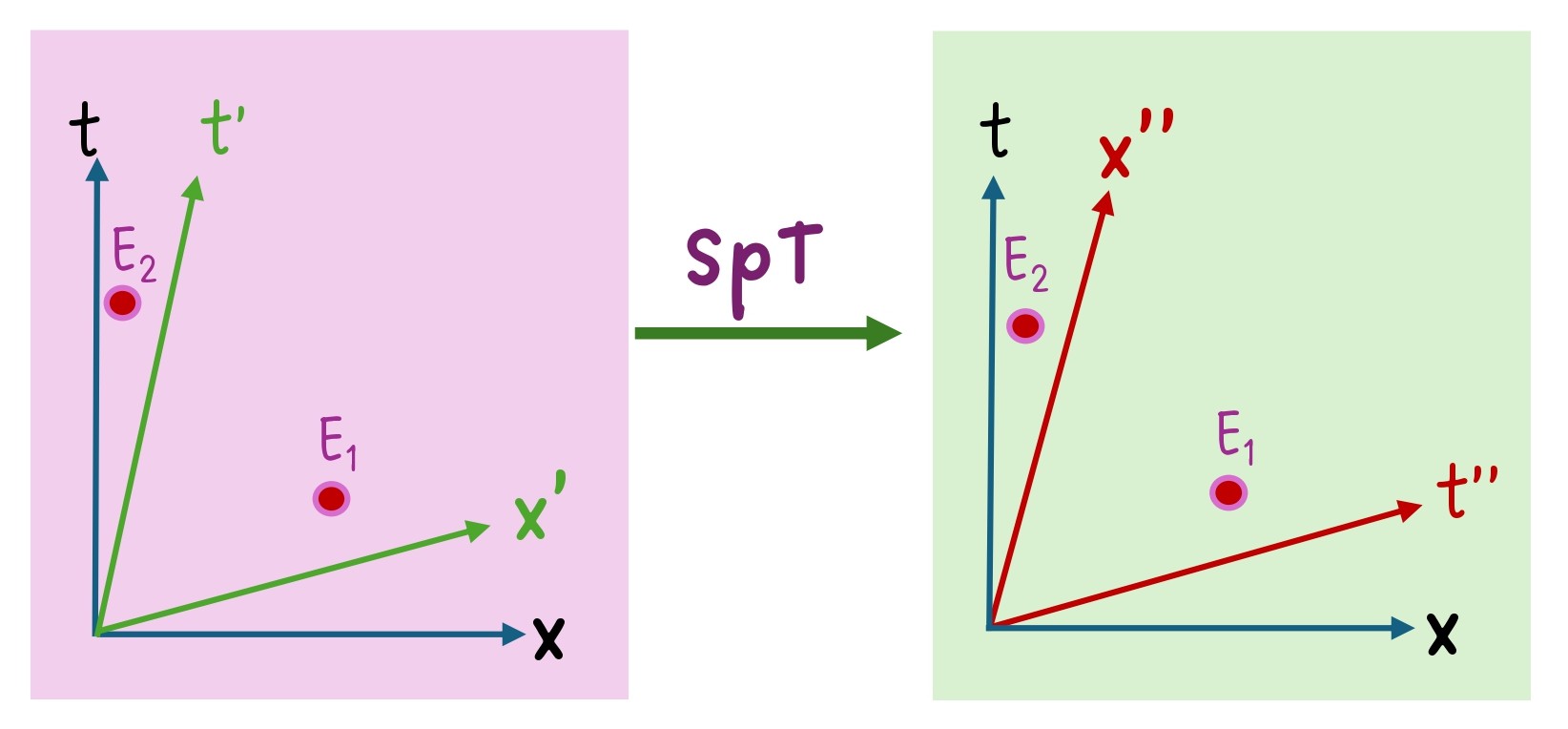}	\caption{Illustration of event ordering under subluminal and superluminal transformations. In the left panel, the green coordinate system $(x',t')$ is related to the blue frame $(x,t)$ by a subluminal transformation. Since both frames are subluminal with respect to one another, they preserve the temporal ordering of the events, with $E_1$ occurring before $E_2$. On the right, the red coordinate system $(x'',t'')$ corresponds to a superluminal transformation relative to the original blue frame. Under this transformation, the temporal ordering of the events is no longer preserved, such that $E_2$ precedes $E_1$ in the $(x'',t'')$ description. The figure schematically illustrates how extended superluminal transformations can alter the observer-dependent ordering of events while retaining the same underlying event structure.
}
		\label{orderchange}        
	\end{figure}
Under superluminal transformations, the coordinate representation of events may change, but the events themselves are treated as invariant relational structures (according to event symmetry).\\ 

\blk

A1. \textbf{Principle of finite information}\\

The concept of a finite universe, together with the principle of finite information density, posits that the universe consists of a finite number of physical resources confined within a finite spatial volume. Hence, at a given instant, the total amount of information it contains is likewise finite, independent of the choice of the reference frame. Equivalently, any spatially bounded region of space on each temporal hypersurface can encode only a finite quantity of information \cite{del2019physics, gisin2020real, del2021indeterminism, PhysRevD.23.287, thooft1993dimensional, susskind1995world, bousso2002holographic, srednicki1993entropy, harlow2016jerusalem, bekenstein2003information}.\footnote{This idea is also supported by the holographic principle, which implies the Bekenstein bound: any finite region of space can store only a limited amount of information~\cite{PhysRevD.23.287}. Since storing information requires energy, and excessive energy density would cause gravitational collapse, the total information content of the universe must be finite.} Analytically, a finite spacetime region contains at most $N < \infty$ bits of information such that:
\begin{equation}
    I(G(t)) = I(E(t)) + I(P(t) | E(t)) + C_{0}  \leq N.
\end{equation}

where $C_0$ is constant.\\



A2. \textbf{Time symmetry of total information}\\


We assume that the maximal storable information in the universe is expected to be approximately symmetric in time with respect to the present in all individual inertial reference frames. So, if there are in total N bits of storable information in the universe, about N/2 bits are used to record information about the past, that is, $I(G(t))_{\text{past}}$, and the other N/2 are recording $I(G(t))_{\text{future}}$, making $
I(G(t))_{\text{past}} \approx I(G(t))_{\text{future}}$.\\ 

This assumption was introduced in Ref. \cite{santo2023open}, and motivated by homogeneity of time.  It is true that there is no general physical principle that guarantees an equal split of the existing information between past and future. However, there is no a priori reason for the universe to allocate its total storable information in a strongly asymmetric way between past and future records.\footnote{In standard frameworks of fundamental physics, dynamical descriptions employ time-symmetric equations of motion, that do not, by themselves, distinguish a preferred temporal direction. This structural feature suggests that, at the level of these laws alone, there is no built-in asymmetry in how physical histories are generated or encoded, independently of any particular theoretical interpretation or ontological commitment.}  A2 is not a claim about equality of \textit{accessibe information} content, but about the \textit{existing} information that encodes all the events. The time-symmetry of the underlying dynamical laws and state descriptions therefore suggests that there should be no fundamental difference between the structure of the past and of the future.\\

A3. \textbf{The past holds memory} \\ 

In any inertial reference frame, the past stores determinate events (records of outcomes that have already actualized), while the future contains only potential events represented by nonextremal propensities. The past records actualized events, such that
\[
P(t)_{\text{past}} \in \{0,1\},
\]
while future propensities can take any rational value in $[0,1]$, encoding potentialities:
\[
P(t)_{\text{future}} \in \mathbb{Q} \cap [0,1].
\]
That is, it records the events that actualized recently, reducing their propensities to  0 for the events that did not take place with certainty and 1 for the events that occurred with certainty. This is at the level of \textit{accessible information}, since past is already accessed/ actualized. Thus, the amount of bits needed to encode rational propensities relative to the future is, in general, greater than the past propensities that each contain only one bit of information since they are already actualized making $I(P(t)| E(t))_{\text{past}} < I(P(t)| E(t))_{\text{future}}$.\\

A4. \textbf{Time has a specific role in every reference frame}\\

Time has a specific role: along it, the causal structure is aligned, and the envelope of the light cone within which actualization can occur is defined. Under any reference frame transformation, it should remain the parameter along which deterministic equations evolve, and, more importantly, the parameter relative to which causal relations are defined.


While different observers may disagree on the decomposition of spacetime into spatial and temporal components. In any chosen inertial reference frame, the decomposition of spacetime into temporal and spatial directions induces a corresponding representation of the light-cone structure, with timelike directions with respect to space-time points within the light cone, lying within the cone defined relative to that frame’s time axis. Different observers adopt different decompositions, but all such representations are related by subluminal or superluminal Lorentz transformations.

\newtheorem*{theorem*}{Theorem}
\begin{theorem*}
The conjunction of all assumptions A0.- A4. cannot hold simultaneously. 
\end{theorem*}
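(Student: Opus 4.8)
The plan is to argue by contradiction: assume A0--A4 all hold and exhibit a clash, in the superluminal frame, between a time-\emph{symmetry} of information forced by A0 together with A2, and the time-\emph{asymmetry} forced by A3 together with A4. Assumption A1 plays the enabling role throughout: only because the total information is finite and frame-independent are the quantities $I(P(t)\mid E(t))_{\text{past}}$ and $I(P(t)\mid E(t))_{\text{future}}$ well-defined, comparable bit-counts at all; were information infinite (the real-number ontology), every such comparison would be vacuous and no contradiction could be extracted, which is also why the theorem is naturally read as ``SpTs and finite information cannot coexist.'' I would therefore fix a present event at the origin, work in the original subluminal frame $(t,x)$, and apply the maximal superboost of Eq.~\eqref{eq: sup} with $v\to\infty$, under which $t'=x$ and $x'=t$.

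The crucial step is to track where this swap sends the causal regions. The light cone $|x|=|t|$ is left invariant, but its interior and exterior are interchanged: the old timelike future $\{t>|x|\}$ is mapped to the new \emph{spacelike} wedge $\{x'>|t'|\}$, while the old spacelike wedges $\{x>|t|\}$ and $\{x<-|t|\}$ are mapped onto the new timelike future $\{t'>|x'|\}$ and new timelike past $\{t'<-|x'|\}$ respectively. Hence the very regions that A3/A4 designate as ``past'' and ``future'' in the primed frame are, physically, the left/right spacelike wedges of the unprimed frame, which are exchanged by the spatial reflection $x\to-x$. The memory asymmetry that A3 pins to the time direction thus lives, after the superboost, along the new \emph{spatial} axis, while the new time axis inherits the (reflection-symmetric) old spatial direction; by A2 the present-centred distribution carries no preferred spatial half, so the new past and future cones are forced to carry equal propensity information, $I(P\mid E)_{\text{past}'}\approx I(P\mid E)_{\text{future}'}$. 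This is exactly the symmetric picture depicted in Fig.~\ref{f3.3}.

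I would then invoke A4 in the opposite direction. Taken at face value, A4 grants time its causal-ordering and actualisation role \emph{in the primed frame too}, so A3 must hold there: the new past stores only actualised, extremal propensities $P_{\text{past}'}\in\{0,1\}$ while the new future remains open with $P_{\text{future}'}\in\mathbb{Q}\cap[0,1]$, yielding the strict inequality $I(P\mid E)_{\text{past}'}<I(P\mid E)_{\text{future}'}$. But the previous paragraph places in the new past cone the image of an old spacelike (hence, by the relative-indeterminacy of Sec.~III, indeterminate) wedge, which necessarily contains non-extremal rationals, contradicting $P_{\text{past}'}\in\{0,1\}$; equivalently, the mirror symmetry gives an approximate equality that cannot accommodate A3's qualitative gap. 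The strict, qualitative inequality and the (mirror) equality cannot both hold, so A0--A4 cannot coexist.

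The main obstacle I anticipate is the frame-dependence of actualisation. One must argue that A3 and A4 genuinely demand the asymmetry \emph{in the new frame}---i.e.\ that actualisation re-aligns with the new time axis---rather than remaining anchored to the old events' frame-independent propensities; this is precisely the content I would extract from A4, read as the statement that time is the actualisation axis in \emph{every} frame, which is exactly what collides with the geometry of the swap. Dually, since A2 is stated only as an \emph{approximate} symmetry, I must ensure the clash does not dissolve into its slack; I would resolve this by emphasising that the contradiction is qualitative (extremal versus non-extremal propensities, provably deposited in the wrong cone by the superboost), so that even a loose version of A2 suffices. A secondary bookkeeping subtlety is whether ``past'' in A3 means the timelike light-cone interior or the full $t'<0$ half-plane; I would adopt the light-cone reading of the open-past framework of Sec.~IV, for which the region identification above is exact, and note that the half-plane reading only strengthens the conclusion.
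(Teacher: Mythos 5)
Your proposal is correct and rests on the same core mechanism as the paper's proof---the maximal superboost swaps $t$ and $x$, turning the old spacelike wedges into the new past and future cones, so that the informational asymmetry demanded around the present collides with the mirror symmetry of the boosted frame. The difference is where the final contradiction is cashed out. The paper first uses A2 and A3 in the \emph{original} frame to derive a compensation condition $I(E(t))_{\mathrm{past}} > I(E(t))_{\mathrm{future}}$ (more nodes in the past to offset the larger future propensity information), and then observes that the SpT equalises the event counts, $I'(E(t))_{\mathrm{past}} = I'(E(t))_{\mathrm{future}}$, so that either the balance of A2 or the finiteness of A1 must fail in the primed frame. You instead route the contradiction through the propensity term: the new past cone is the image of an old spacelike (hence indeterminate) wedge and so contains non-extremal rationals, directly violating $P_{\mathrm{past}'}\in\{0,1\}$ as required by A3-read-in-the-new-frame via A4. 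Your version buys robustness against the ``approximate'' wording of A2---the clash is qualitative (extremal versus non-extremal) rather than a quantitative imbalance---and you explicitly flag the frame-dependence of actualisation as the load-bearing reading of A4, which the paper leaves implicit. The paper's version makes the enabling roles of A1 and A2 more visible in the final step and matches its Fig.~\ref{f3.4} bookkeeping of event counts. Both share the same unstated premise that the original event/propensity distribution over the two spacelike wedges is symmetric (the paper attributes this to the figure, you attribute it---somewhat loosely---to A2), so neither proof is strictly tighter on that point.
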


\begin{proof}


Let us assume A0--A4 all hold. From A1 (finite information), the total information, we have
\begin{equation}
I(G(t)) = I(E(t)) + I(P(t) | E(t)) + C_{0}  \leq N
\label{eq:total}
\end{equation}
and from A2 (time symmetry of total information), we expect
\begin{equation}
I(G(t))_{\mathrm{past}} = I(G(t))_{\mathrm{future}}, 
\label{eq:balance}
\end{equation}
Thus, each side of \eqref{eq:balance} is finite and bounded by $N/2$.\\

From A3 (past holds memory), propensities in the past are extremal and require strictly fewer bits to encode them than the non-extremal ones. Thus,
\begin{equation}
\Delta_P := I(P(t)|(E(t))_{\mathrm{future}} - I(P(t)|(E(t))_{\mathrm{past}} > 0.
\label{eq:propensitygap}
\end{equation} 

Consequently, if the total information in past and future remains approximately balanced as required by A2, the event contribution must compensate for this asymmetry, making : $I(E(t))_{\mathrm{past}} > I(E(t))_{\mathrm{future}}$.  Thus, the left hand side of \ref{f3.4} is the most general description. Note, we can always have at least one scenario with comparable $C_P$ and $C_F$ to get this relation because the constants are subject to the embedding scheme chosen by the observer. Because time has a role (assumption A4), each reference frame still defines its own temporal direction and associated light-cone structure.

Now consider assumption A0, i.e., there exists a SpT, let us say $S$, that does not preserve order. 

To remain consistent with the principle of relativity and the absence of any preferred inertial reference frame, the total information content must be invariant under changes of reference frame, implying that: 
\begin{equation}
I(G) = I'(G),
\label{eq:invariance}
\end{equation}

where the primed information count is the information count in the new superluminal reference frame. The finite total number of events is fixed according to event-symmetry and should be distributed in a frame-independent manner. 
We then achieve the distribution of events under SpTs as on the right in Fig. \ref{f3.4}

\begin{figure}[h]
    \centering
    \includegraphics[width=1.0\columnwidth]{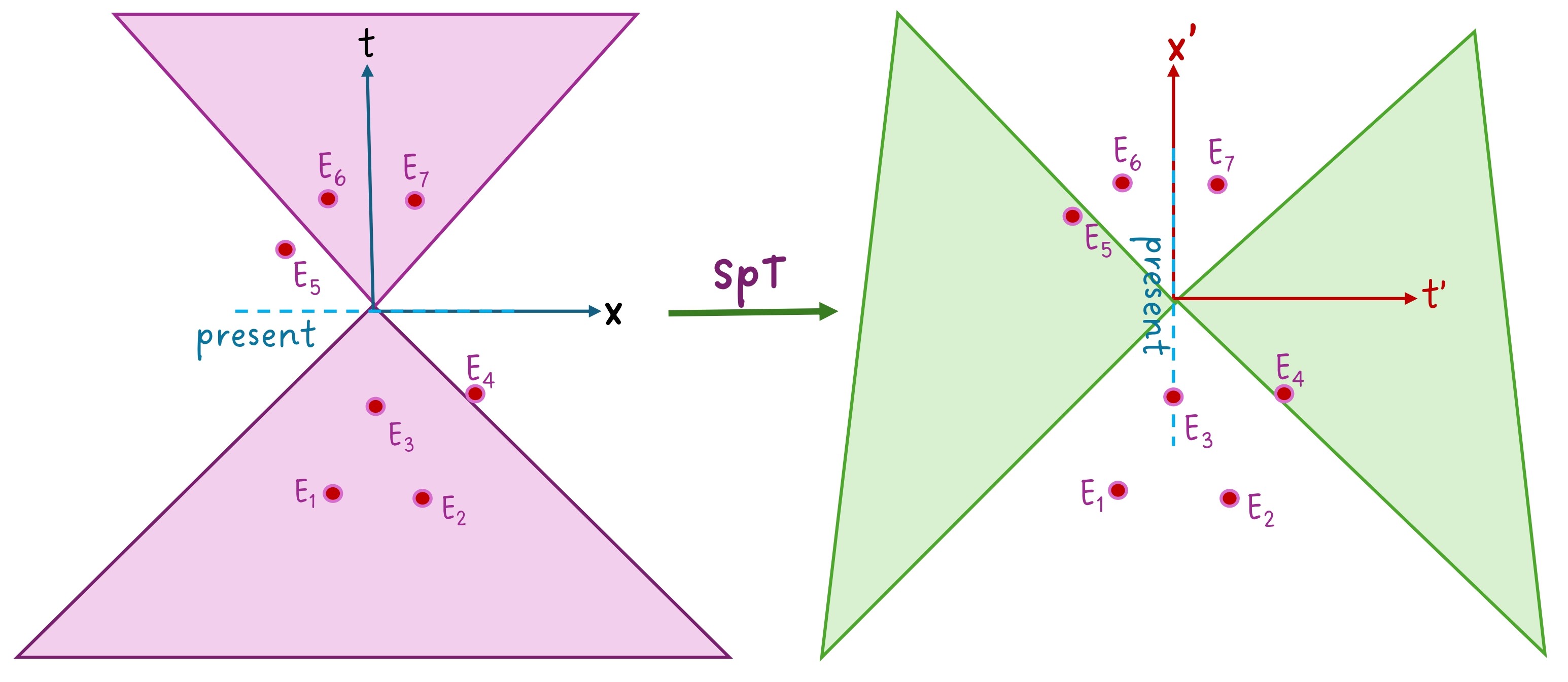}
    \caption{Schematic representation of the evolving causal structure underlying the framework. The diagram depicts a set of events connected through causal relations constrained by the light-cone structure, with temporal ordering defined relative to an observer-dependent decomposition of spacetime. On the left we have the total information structure in pink with events $E_1$ to $E_7$ which undergo an SpT transformation (A0) on the right, where the lightcone is given in the direction of time (A4).} 
    \label{f3.4}
\end{figure}

Under A0, A3, and A4, we see that $I'(E(t))_{\mathrm{past}} = I'(E(t))_{\mathrm{future}}$  in the superluminally boosted reference frame, thereby violating A2 and \eqref{eq:balance} (making $I'(G(t))_{\mathrm{past}} \neq I'(G(t))_{\mathrm{future}}$) or A1 and \eqref{eq:total} (making $I'(G(t)) = I'(G(t))_{\mathrm{past}} + I'(G(t))_{\mathrm{future}} \leq \infty$). 
Hence, A0--A4 are mutually inconsistent.

\end{proof}

\section{Relaxing the Assumptions}
In this section, we will discuss the consequences of violating each assumption.\\

$\lnot$ A0. $\rightarrow$ {\textbf{Non-existence of SpTs}\\ 


One assumption in our analysis is the possible existence of superluminal transformations (SpTs) as kinematical extensions of Lorentz transformations. This assumption can be relaxed by excluding SpTs as physically meaningful structures. That is, we can remove them both as mathematical objects and as candidates for physical symmetries, a priori, in the absence of empirical evidence. If SpTs are excluded, the framework reduces to standard relativistic physics, in which the assumptions A1–A4 are satisfied as in Ref. \cite{santo2023open}.\\

To note, the existence of SpTs in a purely mathematical sense, as just kinematical mappings does not, by itself, imply superluminal signalling. Signalling is defined at the operational level in terms of controllable preparation, evolution, and measurement of physical systems, which remain constrained by the light-cone structure of Minkowski spacetime. In this weaker sense, they serve as transformations at the level of representation rather than as elements of physical symmetry or dynamics. Thus, allowing mathematical SpTs modifies only the kinematical description of spacetime, while leaving the operational causal structure unchanged. No superluminal signalling arises at the level of physical processes.\\

$\lnot$ A1. $\rightarrow$ \textbf{Violation of the principle of finite information}\\

A world with SpTs (A0) can exist at the expense of violating the principle of finite information $(N \rightarrow \infty)$. In such a setting, both the past and the future would contain infinite information, rendering any comparison of their respective contents meaningless (A2 and A3). 
A violation of finite information also allows us to encode infinite information into mathematical real numbers, an assumption of classical deterministic theory. But, in a world with finite information, physical quantities are determined only to a certain level of precision (like in FIQs \cite{del2019physics}), which leads to classical indeterminism. 
 Interestingly, this observation reveals an intriguing question about the relation between two sources of indeterminism: the claimed indeterminism associated with superluminal observers \cite{dragan2020quantum}, and the indeterminism arising from the use of FIQs in classical theories \cite{del2019physics, gisin2020real}. 

In the regime of subluminal transformations, the corresponding energies are always positive, but in the regime of superluminal transformations, we encounter unavoidable negative energies, \cite{Schwartz2018, sen2026superluminal}. Superluminal particles (tachyons), for example, can have positive energies in one reference frame, and can be transformed even with a subluminal Lorentz transformation to a new reference frame where they have negative energies. The problem with the negative energies can be solved by simply reinterpreting the particle's energy and momentum with
\begin{equation}\label{reinterpet}
    (E, p) \rightarrow (-E, -p),
\end{equation}
extending the solution of \cite{Schwartz2018} to the case of superluminal boosts. This is analogous to reinterpreting a negative energy ``incoming'' particle as a positive energy ``outgoing'' particle (or vice versa)\cite{Schwartz2018, sen2026superluminal}. 

But, for finding the negative energy states, we need to compare the real-numbered velocities to the E-p relations \footnote{See Sec.~IIIA of~\cite{sen2026superluminal}, where the precise value of velocity is required to compare against $\sqrt{\frac{m^2}{p^2}+1}$ and determine the sign of energy. This is unnecessary for purely subluminal transformations, since physical subluminal particles can always have positive energies.}, which are again dependent on real-numbered quantities- mass and velocity. This requires infinite precision and therefore infinite information, closely mirroring the ontological structure of orthodox classical theory based on real numbers.\footnote{
A theory containing SpTs together with infinite information would also possess all the supplementary variables required to determine the evolution uniquely. With infinite precision available for specifying and comparing velocities at any time, one can always formulate deterministic dynamical laws yielding real-valued quantities at every instant. This presumed determinacy can be extended to Quantum Reference Frame transformations; in such a theory, reference frames may be in superpositions of any combination of subluminal or superluminal velocities \cite{sen2026superluminal}}.

$\lnot$ A2. $\rightarrow$ \textbf{Violation of time symmetry of total information}\\

A possible explanation for the distribution (as shown in figure \ref{f3.3} for SpLT) of information under SpT would correspond to a violation of the time symmetry of total information. 
This would mean that the informational record, in which the universe records the information about events and their causal relations is not symmetrically around the present moment. Hence $I (G(t))_{\text{past}} \neq I (G(t))_{\text{future}}$. The total information may still remain finite (A3), in a manner consistent with the existence of superluminal transformations (A0) and the arrow of absolute time (A4). \\



$\lnot$ A3. $\rightarrow$ \textbf{Immediate past holds no memories and is random like the future} \\



SpLT \cite{parker1969}, an example of SpT (A0), suggests that under superboosts at infinite speeds, the space and time coordinates interchange. Hence, a possible interpretation of the assumption of temporal symmetry of expected information (A2) in the coordinates of such a SpLT would be $I(P(t)| E(t))_{\text{past}} = I (P(t)| E(t))_{\text{future}}$. That is, the content of information through the propensities is the same both in the past and future. This implies that the propensities in the past and the future are the same, and the past does not retain any memory. Hence, propensities in the past, rather than being extremal values (0 or 1), should become nonextremal, rational-valued weights reflecting a lack of determinate outcomes. Not only is the memory of the distant past erased, but even the immediate past ceases to retain a record, rendering it structurally similar to the future. 

Abandoning a past with memory aligns with the notion that information is actualized only within light cones under a violation of (A4), considering the structure of the lightcone doesnt change. For example, if one were to travel in a spacelike trajectory or undergo a SpT, the fundamental variables would go from indeterminate $\rightarrow$ determinate (at present) $\rightarrow$ indeterminate, which is the other way of saying that the past and future in this case are equally indeterminate.  

In a scenario where the asymmetry between past records and future possibilities becomes weakened, the informational distinction between memory and prediction may cease to remain fundamentally meaningful. One might be tempted to notice some conceptual resemblance to Maxwell’s Demon where the physical role of memory, record keeping, and accessibility of information becomes central to the emergence of thermodynamic irreversibility. If superluminal transformations blur the distinction between past and future informational structures, then the notion of entropy increase associated with stable memory records may also require reinterpretation. While we do not formulate a thermodynamic model in the present work, these considerations suggest a possible connection between superluminal informational structures, entropy, and the arrow of time that may be interesting.\\

$\lnot$ A4. $\rightarrow$ \textbf{Time has no specific physical role in superluminal coordinates}\\ 

Abandoning A4 would imply that time, as a coordinate, has no physical role. It would no longer serve as a direction of information flow but merely as a label in the space-time manifold. 
 This would also imply that the ‘space’ and ‘time’ coordinates are not qualitatively distinct, thus challenging the physical meaning of the arrow of time---or the arrow of actualization.

Studies separating space-time from the information-theoretic structure have been conducted in \cite{vilasini2022general, Vilasini2024, Vilasini2024a}, providing a formalism to analyze causation, causal loops, and signalling depending on the space-time embedding of a given causal structure. We are in the process of formalizing this notion for space-time variables or events embedded in a causal structure that transforms under SpT, which will appear in forthcoming work. However, for this paper, we do not characterize the information-theoretic causal structure based on the geometry of space-time, but talk about the total informational structure with respect to SpLTs. 




\section{Discussion}

Our result represents a deliberate theoretical claim, akin to many foundational proposals, (such as, for example, \textit{P-R boxes} \cite{popescu1994quantum} in hypothetical generalized probabilistic theories) as it involves superluminally boosted transformations or objects. Yet such frameworks provide a way to test the internal consistency of our foundational physical principles when pushed beyond their conventional domains of applicability. The study of superluminal transformations, in particular, can provide a lens through which foundational tensions regarding (in)determinism, information, and the arrow of time can be reexamined. In this way, they confront us with a choice between formal symmetry and mathematical completeness on the one hand, and physical plausibility grounded in finite information and a determinate past on the other.


This motivation also highlights two distinct ways of thinking about causality in spacetime. In the first approach, events and their causal relations are statistical and treated as primitive, while spacetime geometry is emergent- for example, the causal set theory program developed by Sorkin, Dowker, and collaborators \cite{Sorkin2003}. A second and more traditional viewpoint, commonly adopted in general relativity, begins with spacetime points and worldlines as primitive geometric entities, and events defined operationally on them. Our present work philosophically and structurally promotes the former approach.

An important aspect of the present framework concerns the role of finite information. Our assumptions are formulated at a structural and theory-independent level, without committing directly to quantum theory. Nevertheless, one might question that there are suggestive conceptual parallels with quantum limitations such as the energy–time uncertainty relation. Both the finite-information viewpoint and energy–time uncertainty reflect a common structural feature: that physical processes limit the rate at which distinguishable states can be resolved or generated within a finite time interval. While we do not derive quantum uncertainty relations from our framework, since it is beyond the scope and interest of this paper, these similarities may point toward deeper insights about physical theories with finite information.

Our results suggest that the indeterminacy associated with superluminal transformations may differ fundamentally from quantum indeterminacy. Within a finite-information framework, apparent indeterminism can arise from informational constraints and limited accessibility to underlying variables, rather than from incompatible observables or intrinsically probabilistic dynamical laws \cite{del2023prop}. In standard quantum theory, indeterminacy is closely tied to the measurement postulate, where measurement outcomes are not simultaneously definable for incompatible observables, and the act of measurement plays a fundamental role in actualizing physical outcomes. By contrast, the indeterminacy discussed here emerges at a more structural and informational level, independent of the quantum measurement formalism itself.

The notion of indeterminacy introduced by Dragan and Ekert proposes that SpLT leads to fundamental indeterminism. In such a world, no such thing as a truly classical system exists, and classicality becomes an approximation; a perspective that also underlies the broader programme of Quantum Reference Frames. In related work, we investigated the implications of superluminal quantum reference frames for physical principles governing energy, entropy, and Bell inequalities \cite{sen2026superluminal}.\\


In the present work, we adopt a minimal and more general notion of propensities with objective causal weights associated with possible events, following the tradition initiated by Popper’s propensity interpretation \cite{popper1959propensity} and further developed in, e.g.,  \cite{ballentine2016propensity, suarez2010probabilities, gisinpropold, del2019physics, del2023prop}. In this formulation, propensities function as causal weights assigned to alternative potential outcomes and are not assumed to exhibit inherently quantum-mechanical structure or interference effects. However, as noted by Suárez \cite{suarez2010probabilities} and Weinert \cite{weinert2025propensity}, such a propensity-based ontology can in principle be extended to reproduce quantum phenomena if propensities are promoted beyond classical probabilities to complex-valued amplitudes capable of interference. Consequently, questions of superposition, entanglement, or collapse are not excluded by the framework, but they are not just required for the no-go theorem developed here. Note that, since propensities are operationally accessed via relative frequencies in the weak law of large numbers, they can be represented effectively by rational numbers, thereby naturally satisfying a finite-information principle (see \cite{del2023prop} for discussion).

\section{Conclusion}

Orthodox classical physics is traditionally viewed as deterministic; however, this determinism rests on the assumption that initial conditions are specified using real numbers—mathematical objects that contain infinite information. As a result, if we adopt a more realistic notion in which only finite information can be encoded in any physical system, classical trajectories become non-unique, and the theory becomes inherently indeterministic. In this framework, indeterminism in classical theory does not arise from dynamical stochastic laws but from the fundamental kinematical indeterminacy~\cite{del2019physics, gisin2020real, del2021relativity, gisin2021indeterminism, gisin2021synthese, del2021indeterminism, del2023prop, santo2023open, DelSantoGisin2024, del2025features}.
In another context, the authors of~\cite{dragan2020quantum} argue that extending Lorentz symmetry to superluminal regimes explains indeterminism in a way akin to quantum theory.\\ 


We presented a no-go theorem demonstrating that the coexistence of superluminal frames with finite information, time-symmetry of total information, memory of the past, and a consistent arrow of time cannot all be maintained simultaneously. An intuitive argument for the no-go theorem may also be given. Past events are fully determined while future events remain potential. Superluminal transformations, however, can change the temporal ordering of events, mapping events that are not yet fully determined in the future of one inertial reference frame to past events in another. Thus, in this scenario, it is impossible to maintain both a determinate past in superluminal frames and a finite-information framework with a meaningful direction of time.\\

Rather than privileging a specific interpretation, our result primarily highlights the mutual tension among the fundamental assumptions of a physical theory, allowing for more than one interpretation that can accommodate different theoretical commitments.
However, one interesting interpretation of our no-go theorem can suggest that a world admitting superluminal transformations may point towards orthodox classical determinist ontology as a candidate interpretation. In other words, the existence of superluminal observers is permissible only in a universe endowed with infinite information (like in classical theory represented by real numbers).\\


We hope this work motivates further investigations into the relation between spacetime symmetries, finite information, causality, and emergent temporal structure. Whether superluminal extensions ultimately correspond to physical reality or remain purely theoretical constructions, they continue to provide a useful arena in which the conceptual limits of relativistic and informational frameworks can be explored with greater clarity.

\section{Acknowledgements}

 AS acknowledges the IRA Programme, project no. FENG.02.01-IP.05 0006/23, financed by the FENG program 2021-2027, Priority FENG.02, Measure FENG.02.01., with the support of the FNP. We would also like to thank Matthias Salzger and Sebastian Horvat for their valuable comments and suggestions.

\blk

\bibliography{biblio}

\end{document}